\newtheorem{theorem}{Theorem}
\newtheorem{lemma}{Lemma}
\theoremstyle{definition}
\theoremstyle{remark}
\newtheorem{remark}{Remark}
\theoremstyle{definition}
\newtheorem{assumption}{Assumption}
\theoremstyle{definition}
\newtheorem{example}{Example}
\newcommand{\R}{\mathbb{R}}
\newcommand{\C}{\mathcal{C}}
\definecolor{blue}{RGB}{38,38,134}
\definecolor{darkblue}{RGB}{0,0,102}
\definecolor{lightblue}{RGB}{77,77,148}
\definecolor{gold}{RGB}{234, 170, 0}
\definecolor{metallic_gold}{RGB}{139, 111, 78}
\renewcommand{\cal}[1]{\mathcal{ #1 }}
\newcommand{\der}[2]{\frac{\mathrm{d} #1 }{\mathrm{d} #2 }}
\newcommand{\derp}[2]{\frac{\partial #1 }{\partial #2 }}
\newcommand{\new}[1]{{\color{black} #1}}
\begin{document}

\title{ \bf
Disturbance Observers for Robust Safety-critical \\ Control with Control Barrier Functions
}

\author{Anil Alan$^{1}$, Tamas G. Molnar$^{2}$, Ersin Da\c{s}$^{2}$, Aaron D. Ames$^{2}$, G\'abor  Orosz$^{1}$
\thanks{\new{*This research is supported in part by the National Science Foundation, CPS Award \#1932091, Dow (\#227027AT) and Aerovironment.}}%
\thanks{$^{1}$A. Alan and G. Orosz are with the University of Michigan, Ann Arbor, MI 48109, USA. ${\tt\small \{anilalan, orosz \}@umich.edu}$}%
\thanks{$^{2}$T. G. Molnar, E. Da\c{s} and A. D. Ames are with the California Institute of Technology, Pasadena, CA 91125, USA. ${\tt\small \{tmolnar, ersindas, ames\}@caltech.edu}$ } %
}

\maketitle
\thispagestyle{empty}         

\begin{abstract}
This work provides formal safety guarantees for control systems with disturbance.
A disturbance observer-based robust safety-critical controller is proposed, that estimates the effect of the disturbance on safety and utilizes this estimate with control barrier functions to attain provably safe dynamic behavior. 
The observer error bound -- which consists of transient and steady-state parts -- is quantified, and the system is endowed with robustness against this error via the proposed controller.
A \new{connected} cruise control problem is used as illustrative example through simulations including real disturbance data.
\end{abstract}

\section{Introduction} \label{sec:intro}

Safety-critical control has become increasingly crucial for deploying ubiquitous autonomous systems in a priori unknown operational environments.
Examples include robotics and automotive systems, where maintaining safety with control is of utmost priority, even under uncertain dynamics.
Control barrier functions (CBFs) have shown success in achieving this, by providing formal safety guarantees through forward invariance of a pre-defined safe set~ \cite{ames2017control}. 
In particular, CBF-based quadratic programs (CBF-QPs) provide effective solutions for control-affine nonlinear systems and have been implemented in many application domains \cite{nguyen2015safety,breeden2021, thyri2020reactive}.

Many studies on CBFs rely on precise knowledge of the underlying system dynamics.
However, in the presence of model uncertainty or external disturbances, safety guarantees established by CBFs degrade or alter. 
To remedy this concern, robust extensions of CBFs have been proposed that utilize the available knowledge or assumptions about the unmodeled dynamics.
Worst-case uncertainty bounds were incorporated into CBF conditions in \cite{jankovic2018robust, nguyen2021robust} 
to overcome uncertainties. 
This approach may yield conservative results, \new{as will be shown}.
\new{Alternatively}, input-to-state safety (ISSf) characterizes how the safe set changes with disturbances.
It mitigates the conservativeness by bounding safety degradation\new{. However, even ISSf-based methods} may suffer from \new{significant} uncertainty bounds \cite{alan2021safe, Alan__arxiv:22}. \new{While less conservative adaptive control approaches have been proposed to tackle structured parametric uncertainties \cite{lopez2020robust}, their safety guarantees do not include time-varying external disturbances.}

Disturbance observer (DOB) theory -- a robust control technique for suppressing the effects of disturbance and model uncertainty by the feedback of their estimations \cite{chen2015disturbance} -- has recently been adopted in synthesizing safety-critical controllers \cite{das_dist:22}.
The resulting DOB-based scheme estimates the effects of the disturbance on the time derivative of the CBF with an exponentially decaying error bound. 
\new{However, as will be shown, this method may be conservative initially, since it cancels the transient observer error regardless of the initial condition. 
Another DOB-based approach 
observes external disturbances that occur in the system dynamics in an affine expression, multiplied by a known coefficient \cite{Wang__DOB-CBF:22}. Although this method can be effective for the affine problem setup, we seek to ensure robust safety for a more general uncertainty description.}
\new{Additionally, none of these methods consider} how the choice of disturbance observer parameters affects the closed-loop behavior or performance.

To this end, this paper proposes a novel DOB-based safety-critical control framework with CBFs to guarantee robustness against \new{uncertainties}, together with guidelines on the design of DOB and controller parameters.
Our approach takes advantage of the input-to-state stability of the high-gain first-order DOB dynamics introduced in \cite{das_dist:22} and leverages the idea of input-to-state safety~\cite{alan2021safe} to provide robustness against the observer error.
The end result is less conservative robust safety guarantee in the presence of \new{model uncertainties.}

\begin{figure}[t]
	\centering
	\includegraphics[scale=0.77]{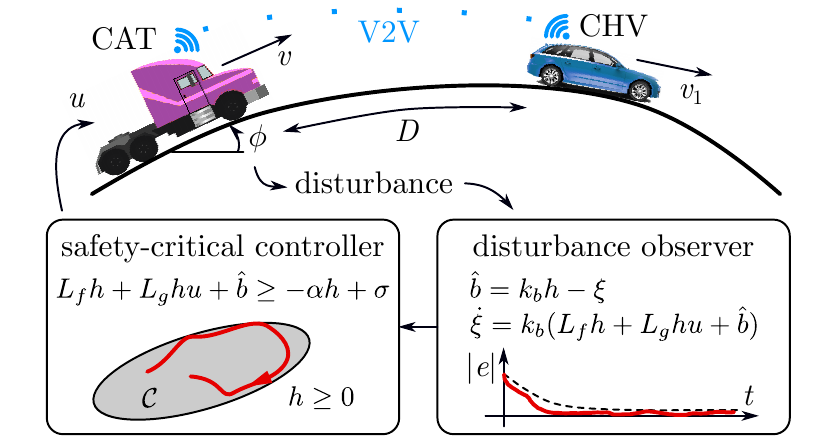}
	\caption{Illustration of the disturbance observer-based safety-critical control framework for an example where a connected automated truck (CAT) follows  a connected human-driven vehicle (CHV) without collision.}
	\label{fig:scheme}
	\vspace{-5mm}
\end{figure}

The paper is organized as follows.
Section~\ref{sec:preliminaries} gives a short summary about CBF theory.
Section~\ref{sec:observer} outlines the DOB and gives a bound for the observer error.
Section~\ref{sec:main} presents the proposed controller design and provides safety guarantees for appropriate observer and controller parameters.
Section~\ref{sec:discussion} gives further discussion on trade-offs for the parameter selection. Throughout the paper a \new{connected} cruise control problem is used as example 
for demonstration purposes.
\new{This practical example is selected for its simplicity to highlight the improvements achieved by the proposed method.
We also use real disturbance data to evaluate the method in a real-world scenario.}
Section~\ref{sec:conclusions} closes with conclusions.\

\section{Preliminaries}
\label{sec:preliminaries}

Consider the system:
\begin{equation}
    \label{eq:pre_system}
    \dot{x} = f(x) + g(x) u,
\end{equation}
with state ${x \in \R^n}$, input ${u\in\R^m}$, and locally Lipschitz continuous functions ${f:\R^{n}\to\R^n}$ and ${g:\R^{n}\to\R^{n\times m}}$.
Given a locally Lipschitz continuous controller ${k: \R^n \to \R^m}$, ${u=k(x)}$, we use the notation $x(t)$ for the unique solution of the corresponding closed-loop system with initial condition ${x(0) = x_0 \in \R^n}$, and we assume $x(t)$ exists for all ${t \geq 0}$. 

We define the notion of safety for the system \eqref{eq:pre_system} in accordance with the forward invariance of a \textit{safe set} $\C\subset \R^n$ given by a continuously differentiable function ${h : \R^n \to \R}$:
\begin{equation}
    \label{eq:C}
    \C \triangleq \left\{ x\in\R^n  ~|~ h(x)\geq 0 \right\}.
\end{equation}
The system \eqref{eq:pre_system} is said to be \textit{safe} with respect to the set $\C$ if it is forward invariant: $x_0 \in \C \implies x(t) \in \C$ for all ${t\geq0}$.

The function $h$ can be used to synthesize controllers that yield safe behavior. We say that $h$ is a \emph{Control Barrier Function (CBF)} \cite{ames2017control} if there exists\footnote{While we choose a constant $\alpha$ for simplicity, an extended class-$\cal{K}$ function, $\alpha : \R \to \R$, could be used more generally.} an ${\alpha>0}$ such that:
\begin{equation}
\label{eq:CBF}
     \sup_{u\in\R^m} \dot{h}(x,u) = \sup_{u\in\R^m} \left[ L_fh(x)+L_gh(x)u \right]>-\alpha h(x),
\end{equation}
where ${L_f h(x) \triangleq \derp{h(x)}{x} f(x)}$ and ${L_g h(x) \triangleq \derp{h(x)}{x} g(x)}$.
The set of CBF-based safe controllers is given as:
\begin{equation}
    \label{eq:K_CBF}
    K_{\rm CBF}(x) = \{ u\in\R^m ~|~ L_fh(x)+L_gh(x)u \geq -\alpha h(x) \}.
\end{equation}
One of the main results in \cite{ames2017control} proves that controllers taking values in $K_{\rm CBF}(x)$ for all ${x \in \R^n}$ lead to the safety of \eqref{eq:pre_system}.

\section{Disturbance Observer}
\label{sec:observer}

The safety guarantees of controllers in $K_{\rm CBF}$ may deteriorate in the presence of an
\new{uncertainty in the model.} 
In the rest of this paper, we consider the system:
\begin{equation}
    \label{eq:system}
    \dot{x} = f(x,r) + g(x) u + p(x,w),
\end{equation}
with state ${x \in \R^n}$, input ${u\in\R^m}$, reference signal ${r\in\R^l}$, disturbance ${w\in\R^q}$, and locally Lipschitz continuous functions
${f:\R^{n}\times\R^l\to\R^n}$, ${g:\R^{n}\to\R^{n\times m}}$ and ${p:\R^{n}\times\R^q\to\R^n}$.
The reference $r$ is assumed to be known,
thus it can be addressed by controllers ${k: \R^n \times \R^l \to \R^m}$, ${u=k(x,r)}$.
However, \new{$w$ and $p(x,w)$ are} unknown.

\new{The objective of the problem formulation is to quantify the effect of these uncertainties on safety.
Therefore}, we consider the time derivative of $h$ along the system:
\begin{equation}
    \label{eq:hdot}
    \dot{h}(x,u,r,w) =  L_fh(x,r) + L_gh(x)u + b(x,w),
\end{equation}
where
${b(x,w) \triangleq \derp{h(x)}{x} p(x,w)}$. If $b$ is negative at the safe set boundary (at ${h(x)=0}$), a controller in ${K_{\rm CBF}(x,r)}$ may fail to ensure that $\dot{h}$ is non-negative, which implies that the system would leave the safe set.

\begin{example} \label{ex:setup}
Consider the setup in Fig.~\ref{fig:scheme}, where a connected automated truck (CAT) is controlled to follow a connected human-driven vehicle (CHV).
Let $D$ be the distance between vehicles, $v$ and $v_1$ be the speeds of the CAT and CHV, and $u$ be the commanded acceleration of the CAT,
with dynamics:
\begin{align}
\begin{split}
    \dot{D} &= v_1-v, \\
    \dot{v} &= u - a(\phi) - c v^2,
\end{split} 
\label{eq:truck_system}
\end{align}
where $\phi$ is the time varying road grade,
$a(\phi) = g (\sin \phi + \gamma \cos \phi)$,
$g$ is the gravitational acceleration, 
$\gamma$ is the rolling resistance coefficient, and $c$ is the air drag coefficient.
We assume that the CHV's speed $v_1$ is available to the CAT through vehicle-to-vehicle (V2V) communication, hence it is a known reference, ${r=v_1}$. At the same time, we regard the road grade as an unknown disturbance, ${w=\phi}$.
By defining the state $x=[D,v]^\top$, system~\eqref{eq:truck_system} can be written as~\eqref{eq:system} with:
\begin{align}
\begin{split}
    f(x,r)\!=\!
    \begin{bmatrix}
    v_1-v \\
    -c v^2
    \end{bmatrix}\!, \;
    g(x)\!=\!
    \begin{bmatrix}
    0 \\ 1
    \end{bmatrix}\!, \;
    p(x,w)\!=\!
    \begin{bmatrix}
    0 \\ - a(\phi)
    \end{bmatrix}\!.
\end{split}
\label{eq:truck_functions}
\end{align}
This control problem is often referred to as connected cruise control \cite{orosz2016connected}.

\begin{figure}[t]
	\centering
	\includegraphics[trim=40 212 60 200,clip,width=0.95\linewidth]{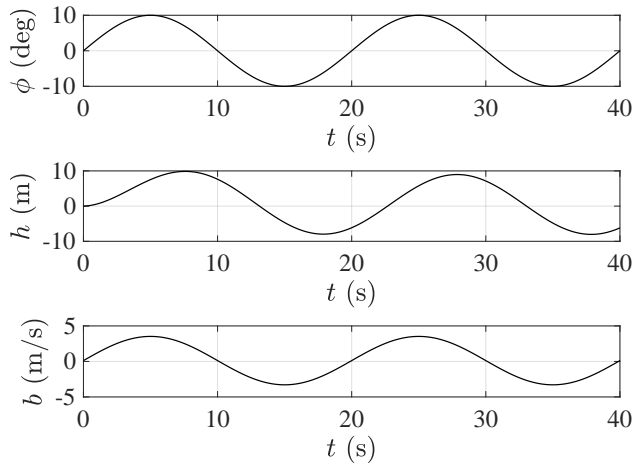}
	\caption{
	Simulations for Example~\ref{ex:setup}, with the time varying road grade that acts as disturbance (top), the evolution of the CBF $h$ (middle), and the effect $b$ of the disturbance on $\dot{h}$ (bottom).
    The controller that disregards the disturbance fails to maintain safety ($h$ goes negative).
    }
	\label{fig:Ex1}
\vspace{-3 mm}	
\end{figure}

To keep safe distance, we use a time-headway based CBF:
\begin{equation}
    \label{eq:truck_CBF}
    h(x) = D - D_{\rm sf} - T v,
\end{equation}
where ${D_{\rm sf}>0}$ is the safe stopping distance and ${T>0}$ is the safe time headway.
This yields ${L_fh(x,r)=v_1-v+Tcv^2}$ and ${L_gh(x)=-T}$.
We consider the controller ${u = k(x,r)}$,
\begin{multline}
    k(x,r) = - \frac{L_fh(x,r)+\alpha h(x)}{L_gh(x)} \\
    = \alpha \left( \kappa (D - D_{\rm sf}) - v \right) + \kappa (v_1 - v) + c v^2,
    \label{eq:ACC}
\end{multline}
with ${\kappa = 1/T}$, that is an element of ${K_{\rm CBF}(x,r)}$.
The controller disregards the road grade that acts as a disturbance.
Fig.~\ref{fig:Ex1} presents simulation results with parameters in Table~\ref{tab:param}, constant CHV speed ${v_1=v^*}$ and sinusoidal road grade:
\begin{equation}
    \label{eq:truck_road}
    \phi(t)=\Phi\sin(\omega t).
\end{equation}
The top and middle panels highlight that
safety is violated ($h$ becomes negative) due to the disturbance, whose effect ${b(x(t),w(t))=Ta(\Phi\sin(\omega t))}$ is plotted at the bottom.
\end{example}

We propose to use a disturbance observer to enforce safety robustly, under the following assumption.
\begin{assumption} \label{ass:bh}
Function $b(x(t),w(t))$ is Lipschitz continuous in $t$ over ${t \geq 0}$ with Lipschitz constant $b_h$.
\end{assumption}
\noindent \new{Note that Assumption~\ref{ass:bh} relaxes the assumption in \cite{das_dist:22} from differentiability to Lipschitz continuity.} If $b(x(t),w(t))$ is differentiable in $t$, $b_h$ is an upper bound on its derivative, ${|\der{}{t}b(x(t),w(t))| \leq b_h}$.

\vspace{1 mm}

To account for the unknown value of $b$, we utilize the high-gain first-order disturbance observer from \cite{das_dist:22}:
\begin{align}
    \label{eq:bhat}
    \hat{b}(x,\xi) &\triangleq k_b h(x) - \xi, \\
    \dot{\xi} &= \underbrace{k_b \left(L_fh(x,r) + L_gh(x)u + \hat{b}(x,\xi)\right)}_{f_\xi(x,u,r,\xi)}, \label{eq:xidot}
\end{align}
where ${\xi\in\R}$ is an auxiliary state and ${k_b>0}$ is the observer gain. 
By slight abuse of notation, we denote $b(x(t),w(t))$ and $\hat{b}(x(t),\xi(t))$ shortly as $b(t)$ and $\hat{b}(t)$.
We define the observer error ${e(t) \triangleq b(t) - \hat{b}(t)}$ with initial value ${e(0) = e_0 \in \R}$.
The error dynamics are characterized as follows.
\begin{lemma}
\label{lem:Vb}
Consider system \eqref{eq:system}, a continuously differentiable function $h$, function $b$ defined by \eqref{eq:hdot} with Lipschitz constant $b_h$, and the disturbance observer \eqref{eq:bhat}-\eqref{eq:xidot} with ${k_b>0}$.
The following bound holds for the error ${e(t) = b(t) - \hat{b}(t)}$:
\begin{equation}
    \label{eq:ebound}
    |e(t)| \leq \left(|e_0| - \frac{b_h}{k_b} \right) {\rm e}^{-k_b t} + \frac{b_h}{k_b}.
\end{equation}
\end{lemma}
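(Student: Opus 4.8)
The plan is to reduce the claim to a scalar linear differential equation for the error and integrate it. First I would compute the dynamics of the estimate. Differentiating $\hat{b} = k_b h(x) - \xi$ along trajectories and substituting the expression for $\dot{h}$ from \eqref{eq:hdot} together with the auxiliary dynamics $f_\xi$ from \eqref{eq:xidot}, the common terms $L_f h(x,r) + L_g h(x) u$ cancel, leaving $\dot{\hat{b}} = k_b(b - \hat{b}) = k_b e$. Hence the error satisfies the scalar linear ODE $\dot{e} = \der{b}{t} - \dot{\hat{b}} = -k_b e + \der{b}{t}$, a stable first-order system driven by the rate of the true disturbance effect.

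Next I would treat $\der{b}{t}$ as a bounded forcing term. By Assumption~\ref{ass:bh}, $b(t) = b(x(t),w(t))$ is Lipschitz in $t$, hence absolutely continuous and differentiable almost everywhere with $|\der{b}{t}| \leq b_h$. The variation-of-constants formula then gives $e(t) = \mathrm{e}^{-k_b t} e_0 + \int_0^t \mathrm{e}^{-k_b(t-s)}\, \der{b}{s}\, \mathrm{d}s$. Taking absolute values, moving the bound $b_h$ outside the integral, and using $\int_0^t \mathrm{e}^{-k_b(t-s)}\,\mathrm{d}s = (1 - \mathrm{e}^{-k_b t})/k_b$ yields $|e(t)| \leq |e_0|\,\mathrm{e}^{-k_b t} + (b_h/k_b)(1 - \mathrm{e}^{-k_b t})$, which rearranges into the advertised bound \eqref{eq:ebound}.

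The main obstacle is precisely that $b$ is only assumed Lipschitz rather than continuously differentiable (this is the relaxation over \cite{das_dist:22}), so $\der{b}{t}$ need not exist everywhere and classical differentiation is not licensed. I would handle this by noting that $e$ is absolutely continuous---being the sum of the Lipschitz function $b$ and the $C^1$ function $-\hat{b}$---so the integrating-factor manipulation and the resulting integral formula above remain valid via the fundamental theorem of calculus for absolutely continuous functions, with $|\der{b}{s}| \leq b_h$ holding for almost every $s$. An equivalent, perhaps cleaner, route is the comparison lemma applied to $V(t) = |e(t)|$: its upper right Dini derivative obeys $D^+ V \leq -k_b V + b_h$ both on the sets where $e \neq 0$ (differentiate $|e|$ and use the error ODE) and at instants where $e = 0$ (bound $D^+ V \leq |\dot{e}| = |\der{b}{t}| \leq b_h$), and comparing $V$ with the solution of $\dot{\bar{V}} = -k_b \bar{V} + b_h$, $\bar{V}(0) = |e_0|$, reproduces \eqref{eq:ebound}. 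Either way, the two terms separate naturally into the decaying transient $\mathrm{e}^{-k_b t}$ set by the initial error and the steady-state floor $b_h/k_b$ fixed by the observer gain.
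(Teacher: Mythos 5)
Your proof is correct, and its core is the same as the paper's: reduce everything to the stable first-order error dynamics driven by the increments of $b$, obtain the representation $e(t) = e_0\,\mathrm{e}^{-k_b t} + \int_0^t \mathrm{e}^{-k_b(t-s)}\,\mathrm{d}b(s)$, and bound the convolution term by $\tfrac{b_h}{k_b}\bigl(1-\mathrm{e}^{-k_b t}\bigr)$ using the Lipschitz constant. The one genuine difference is the technical device for the crux difficulty---that $b$ is only Lipschitz, not differentiable. The paper never differentiates $b$: it solves the $\hat{b}$-dynamics by a convolution integral and then performs integration by parts with a Riemann--Stieltjes integral $\int_0^t \mathrm{e}^{-k_b(t-\theta)}\,\mathrm{d}b(\theta)$, bounding the increments of $b$ directly by $b_h$. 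You instead differentiate $b$ almost everywhere (Lipschitz $\Rightarrow$ absolutely continuous $\Rightarrow$ differentiable a.e.\ with $|\dot{b}|\leq b_h$ a.e.) and run variation of constants on the error ODE $\dot{e} = -k_b e + \dot{b}$, justified by the fundamental theorem of calculus for absolutely continuous functions. The two representations coincide, since $\mathrm{d}b(\theta) = \dot{b}(\theta)\,\mathrm{d}\theta$ for absolutely continuous $b$; your route leans on measure-theoretic facts (Rademacher/Lebesgue) where the paper stays within Stieltjes calculus, but both legitimately close the gap left by dropping differentiability, which is exactly the relaxation over \cite{das_dist:22} that the lemma is designed to accommodate. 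Your alternative comparison-lemma route on $V=|e|$ also works, with the small caveat that the differential inequality only holds a.e.\ rather than at every instant; this suffices here precisely because $V$ is absolutely continuous, so you should state that rather than appeal to the everywhere-defined Dini-derivative version of the comparison lemma.
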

\begin{proof}
Using \eqref{eq:hdot}, \eqref{eq:bhat} and \eqref{eq:xidot}, the observer dynamics read:
\vspace{-2 mm}
\begin{equation}
    \dot{\hat{b}} = k_b (b - \hat{b}).
\end{equation}
This is a linear dynamical system whose solution can be expressed by the convolution integral:
\begin{equation}
    \hat{b}(t) = \hat{b}(0) {\rm e}^{-k_b t} + \int_{0}^{t} {\rm e}^{-k_b(t-\theta)} k_b b(\theta) {\rm d}\theta.
\end{equation}

Hence, the evolution of the error is given by:
\begin{equation}
    e(t) = b(t) - b(0) {\rm e}^{-k_b t} + e_0 {\rm e}^{-k_b t}- \int_{0}^{t} {\rm e}^{-k_b(t-\theta)} k_b b(\theta) {\rm d}\theta,
    \label{eq:error_expression}
\end{equation}
where ${b(0) {\rm e}^{-k_b t}}$ was added and subtracted.
Via integration by parts, the following holds:
\begin{multline}
    b(t) - b(0) {\rm e}^{-k_b t} = \left[ b(\theta) {\rm e}^{-k_b(t-\theta)} \right]_{0}^{t} \\
    = \int_{0}^{t} {\rm e}^{-k_b(t-\theta)} k_b b(\theta) {\rm d}\theta + \int_{0}^{t} {\rm e}^{-k_b(t-\theta)} {\rm d}b(\theta),
    \label{eq:Lem1Proof_step}
\end{multline}
where a Stieltjes integral~\cite{riesz2012functional} is used to handle the potential non-differentiability of $b(\theta)$.
Substituting \eqref{eq:Lem1Proof_step}  into~\eqref{eq:error_expression} gives:
\begin{equation}
    e(t) = e_0 {\rm e}^{-k_b t}
    + \int_{0}^{t} {\rm e}^{-k_b(t-\theta)} {\rm d}b(\theta).
    \label{eq:error_integral}
\end{equation}

Due to the Lipschitz property of $b$ in Assumption~\ref{ass:bh}, the magnitude of the integral can be upper-bounded by:
\begin{equation}
    \left| \int_{0}^{t} {\rm e}^{-k_b(t-\theta)} {\rm d}b(\theta) \right|
    \leq \int_{0}^{t} {\rm e}^{-k_b(t-\theta)} b_h {\rm d}\theta
    = \frac{b_h}{k_b}\left( 1 - {\rm e}^{-k_b t} \right).
\end{equation}
With this, the absolute value of~\eqref{eq:error_integral} finally leads to~\eqref{eq:ebound}.
\end{proof}

\begin{table}[b]
    \centering
    \begin{tabular}{|rl|rl|rl|}
    \hline
     $g$           & 9.81 m/s$^2$  & $\gamma$  & 0.006   & $c$      & 0.000428 1/m \\ \hline
     $D_{\rm sf}$  & 5 m           & $T$       & 2 s     & $\alpha$ & 0.25 1/s \\ \hline
     $v^*$         & 20 m/s        & $\Phi$    & 10 deg  & $\omega$ & 0.05$\times$2$\pi$ rad/s.   \\ \hline
    \end{tabular}
    \caption{Parameters of the \new{connected automated} truck example.}
    \label{tab:param}
\end{table}

\begin{example} \label{ex:observer}
Consider the car-following setup in Example~\ref{ex:setup}.
For the sinusoidal road grade \eqref{eq:truck_road}, $b$ is differentiable with respect to $t$ such that
${|\dot{b}(t)| \leq b_h = T g \sqrt{1+\gamma^2} \Phi \omega \approx T g \Phi \omega}$. The evolution of $|\dot{b}|$ and $b_h$ are illustrated in the top panel of Fig.~\ref{fig:Ex2}. 
We employ the observer defined by~\eqref{eq:bhat}-\eqref{eq:xidot}, where:
\begin{align}
\begin{split}
    &\hat{b}(x,\xi) = k_b \left( D - D_{\rm sf} - T v \right) - \xi, \\
    &\dot{\xi} = k_b \big(v_1 \!-\! v \!+\! T c v^2 \!-\! T u \!+\! k_b \left( D \!-\! D_{\rm sf} \!-\! T v \right) \!-\! \xi \big).
\end{split}
\label{eq:truck_observer}
\end{align}
The bottom panel of Fig.~\ref{fig:Ex2} shows the observer error for ${|e_0|=5}$ and various $k_b$ values.
The observer error decreases with increasing $k_b$ and satisfies the bound \eqref{eq:ebound} for all cases.
\end{example}

\begin{figure}[t]
	\centering
	\includegraphics[trim=40 180 60 200,clip,width=.95\linewidth]{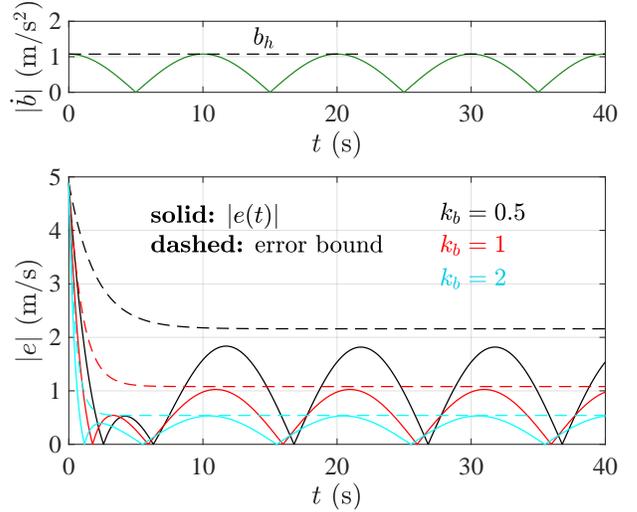}
	\vspace{-2 mm}
	\caption{
	Simulations for Example~\ref{ex:observer}, with the evolution of $|\dot{b}|$ and its upper bound $b_h$ (top) and the observer error for various observer gains $k_b$ (bottom).
	The observer error decreases with increasing $k_b$ and satisfies the error bound~\eqref{eq:ebound} in Lemma~\ref{lem:Vb}.
    }
	\label{fig:Ex2}
	\vspace{-4 mm}
\end{figure}

\begin{remark}
Lemma~\ref{lem:Vb} states the \textit{input-to-state stability}~\cite{sontag2008input} of the observer error dynamics around ${e=0}$. \new{The error bound~\eqref{eq:ebound} is stricter than the one presented in \cite{das_dist:22},}
and it consists of transient and steady-state parts; see Fig.~\ref{fig:Ex2}. 
The larger the observer gain $k_b $ is, the faster the transient decays and the narrower the steady-state error band is. 
\end{remark}

Next, we use the observed disturbance $\hat{b}$ to compensate for the unknown true disturbance $b$.
The observer error prevents ideal compensation. 
We introduce a modification to $K_{\rm CBF}$ to incorporate the disturbance observer into the controller and ensure safety.


\section{Main Result}
\label{sec:main}

We incorporate the disturbance observer into the CBF-based control design with the following modification of~\eqref{eq:K_CBF}:
\begin{align}
    \label{eq:Khat_CBF}
    \hat{K}_{\rm CBF}(x,r,\xi) = \{ & u\in\R^m  ~|~  L_fh(x,r) + L_gh(x)u \nonumber \\ 
    & + \hat{b}(x,\xi)  \geq -\alpha h(x) + \sigma  \}, 
\end{align}
where parameter ${\sigma>0}$ is inspired by the framework of \textit{input-to-state safety}~\cite{alan2021safe} to provide robustness against the observer error $e$.
To ensure that $\hat{K}_{\rm CBF}(x,r,\xi)$ is non-empty for any ${\xi \in \R}$, we assume that $h$ is a CBF with ${L_gh(x) \neq 0}$, ${\forall x \in \C}$.
The following theorem relates the controllers from $\hat{K}_{\rm CBF}$
to the safety of the disturbed system.
\begin{theorem}
\label{theo:main}
Consider system \eqref{eq:system}, CBF $h$ defining the set $\C$ as \eqref{eq:C}, function $b$ defined by \eqref{eq:hdot} with Lipschitz constant $b_h$, the disturbance observer \eqref{eq:bhat}-\eqref{eq:xidot} with ${k_b>0}$, and a Lipschitz continuous controller ${u = \hat{k}(x,r,\xi)\in \hat{K}_{\rm CBF}(x,r,\xi)}$.
\begin{itemize}[leftmargin=*]
    \item If ${\sigma \geq \max\{|e_0|,b_h/k_b\}}$, then $\C$ is rendered forward invariant, i.e., ${x_0 \in \C \implies x(t) \in \C}$.
    \item If ${\sigma \geq b_h/k_b}$ and ${k_b > \alpha}$, then ${x_0 \in \C_0 \cap \C \implies x(t) \in \C}$ with
    ${\C_0 = \left\{ x\in\R^n  ~|~ h(x)\geq (|e_0|-b_h/k_b)/(k_b-\alpha) \right\}}$.
\end{itemize}
\end{theorem}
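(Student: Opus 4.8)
The plan is to turn the membership $u = \hat{k}(x,r,\xi) \in \hat{K}_{\rm CBF}(x,r,\xi)$ into a differential inequality for $h$ along the closed-loop trajectory, and then invoke the comparison lemma to reduce safety to the positivity of a scalar linear ODE. First I would rewrite the true derivative \eqref{eq:hdot} by adding and subtracting $\hat{b}$, so that $\dot{h} = L_fh(x,r) + L_gh(x)u + \hat{b}(x,\xi) + e(t)$. The defining inequality of $\hat{K}_{\rm CBF}$ in \eqref{eq:Khat_CBF} then yields $\dot{h} \geq -\alpha h(x) + \sigma + e(t) \geq -\alpha h(x) + \sigma - |e(t)|$. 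Substituting the error bound \eqref{eq:ebound} from Lemma~\ref{lem:Vb} gives the key inequality $\dot{h} \geq -\alpha h(x) + \phi(t)$, where $\phi(t) \triangleq \sigma - b_h/k_b - (|e_0| - b_h/k_b)\,\mathrm{e}^{-k_b t}$ collects the steady-state and transient margins.

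For the first bullet, I would show that the hypothesis $\sigma \geq \max\{|e_0|, b_h/k_b\}$ forces $\phi(t) \geq 0$ for all $t \geq 0$: the function $\phi$ is monotone in $t$ (the sign of $\dot{\phi}$ equals that of $|e_0| - b_h/k_b$), with endpoints $\phi(0) = \sigma - |e_0| \geq 0$ and $\phi(\infty) = \sigma - b_h/k_b \geq 0$, so the two endpoints bound the whole trajectory. Then $\dot{h} \geq -\alpha h(x)$, and comparison against $\dot{y} = -\alpha y$ with $y(0) = h(x_0) \geq 0$, whose solution $h(x_0)\,\mathrm{e}^{-\alpha t}$ is nonnegative, gives $h(x(t)) \geq 0$, i.e., forward invariance of $\C$.

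For the second bullet I can no longer guarantee $\phi(t) \geq 0$, since $\phi(0) = \sigma - |e_0|$ may be negative during the transient. Instead, using only $\sigma \geq b_h/k_b$ I lower-bound $\phi(t) \geq -(|e_0| - b_h/k_b)\,\mathrm{e}^{-k_b t}$ and apply the comparison lemma to the linear system $\dot{y} = -\alpha y - (|e_0| - b_h/k_b)\,\mathrm{e}^{-k_b t}$ with $y(0) = h(x_0)$. Here the hypothesis $k_b > \alpha$ is exactly what is needed to integrate $\mathrm{e}^{-(k_b - \alpha)t}$ in closed form, yielding $y(t) = \mathrm{e}^{-\alpha t}\big[ h(x_0) - \tfrac{|e_0| - b_h/k_b}{k_b - \alpha}\,(1 - \mathrm{e}^{-(k_b-\alpha)t}) \big]$. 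Since the subtracted term is monotonically increasing in $t$ with supremum $\delta \triangleq (|e_0| - b_h/k_b)/(k_b - \alpha)$, the condition $h(x_0) \geq \delta$, i.e., $x_0 \in \C_0$, forces $y(t) \geq 0$, and hence $h(x(t)) \geq y(t) \geq 0$.

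The main obstacle is the second bullet: unlike the first case, the margin $\phi(t)$ is genuinely negative early on, so Nagumo-type ``check only at the boundary'' reasoning fails and one must account for how far $h$ can dip across the entire transient. The comparison lemma converts this into solving a scalar linear ODE, and the real content is that the worst case is the $t \to \infty$ limit of the accumulated transient contribution — which is finite precisely because $k_b > \alpha$ — reproducing the threshold that defines $\C_0$. I would also record the consistency check that when $|e_0| \leq b_h/k_b$ the two bullets coincide: then $\delta \leq 0$, so $\C_0 \supseteq \C$ and $\C_0 \cap \C = \C$, while $\sigma \geq b_h/k_b = \max\{|e_0|, b_h/k_b\}$ recovers the hypothesis of the first case.
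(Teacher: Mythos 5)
Your proposal is correct and follows essentially the same route as the paper's proof: the identical differential inequality $\dot{h} \geq -\alpha h + \sigma + e$ obtained from the definition of $\hat{K}_{\rm CBF}$ together with the error bound of Lemma~\ref{lem:Vb}, then the comparison lemma and the explicit solution of the resulting scalar linear ODE (the paper keeps the $\sigma - b_h/k_b$ term inside the ODE and checks term-by-term nonnegativity of its solution, whereas you drop that term first and solve a simpler ODE --- a cosmetic difference, as is your endpoint-monotonicity argument versus the paper's $E=\max\{|e_0|,b_h/k_b\}$ bounding step in the first bullet). The only imprecision is that when ${|e_0| < b_h/k_b}$ your ``subtracted term'' is decreasing with supremum $0$ rather than $\delta$, so ${h(x_0)\geq\delta}$ alone would not suffice; but the hypothesis ${x_0 \in \C}$ covers exactly that degenerate case, as your own consistency check at the end already notes.
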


\begin{proof}
By~\eqref{eq:Khat_CBF} and~\eqref{eq:ebound}, the time derivative~\eqref{eq:hdot} of $h$ satisfies:
\begin{align}
    \dot{h} &\geq - \alpha h + \sigma + e \\
     &\geq - \alpha h + \left( \frac{b_h}{k_b} - |e_0| \right) {\rm e}^{-k_b t} + \sigma - \frac{b_h}{k_b}
     \label{eq:safety_proof_1} \\
     &\geq - \alpha h + (E - |e_0|) {\rm e}^{-k_b t} + \sigma - E,
     \label{eq:safety_proof_2}
\end{align}
where
${E = \max\{|e_0|,b_h/k_b\} \geq |e_0|}$,
and function arguments were dropped for brevity.
If ${\sigma \geq E}$,~\eqref{eq:safety_proof_2} leads to ${\dot{h} \geq -\alpha h}$, and yields the first theorem statement.

To prove the second statement, consider the (unique) function ${y:\R_{\geq 0} \to \R}$ that satisfies:
\begin{align}
\begin{split}
    \dot{y} & = - \alpha y + \left( \frac{b_h}{k_b} - |e_0| \right) {\rm e}^{-k_b t} + \sigma - \frac{b_h}{k_b}, \\
    y(0) & = h(x_0).
\end{split}
\label{eq:y_IVP}
\end{align}
By applying the comparison lemma for~\eqref{eq:safety_proof_1} and~\eqref{eq:y_IVP}, we get ${h(x(t)) \geq y(t)}$, ${\forall t \geq 0}$, where $y(t)$ is obtained from~\eqref{eq:y_IVP} as:
\begin{multline}
    y(t) = \left( h(x_0) + \frac{b_h/k_b - |e_0|}{k_b - \alpha} \right) \left( {\rm e}^{-\alpha t} - {\rm e}^{-k_b t} \right) \\
    + h(x_0) {\rm e}^{-k_b t} + \frac{\sigma - b_h/k_b}{\alpha} \left( 1 - {\rm e}^{-\alpha t} \right).
\label{eq:comparison_y}
\end{multline}
Given ${k_b > \alpha>0}$, ${\sigma \geq b_h/k_b}$ and ${x_0 \in \C_0 \cap \C}$, each of the terms above are non-negative.
This leads to
${h(x(t)) \geq y(t) \geq 0}$, that is, ${x(t) \in \C}$, ${\forall t \geq 0}$.
\end{proof}

\begin{remark}
    The first statement of Theorem~\ref{theo:main} expresses that the set $\C$ can be made forward invariant for the disturbed system if parameter $\sigma$ is chosen to be large enough, such that it overcomes both the transient observer error (${\sigma \geq |e_0|}$) and the steady-state error bound (${\sigma \geq b_h/k_b}$) in~\eqref{eq:ebound}.
\end{remark}

\begin{remark}
    The second statement of Theorem~\ref{theo:main} addresses the case when parameter $\sigma$ overcomes the steady-state error (${\sigma \geq b_h/k_b}$) but not necessarily the transient error (potentially ${\sigma < |e_0|}$).
    Then, safety requires the initial state to satisfy ${x_0 \in \C_0 \cap \C}$.
    The larger the initial observer error $|e_0|$ is, the smaller $\C_0$ gets, which implies that the system must be located far inside the safe set initially.
    Additionally, safety requires large enough observer gain $k_b$, such that the convergence rate $k_b$ of the observer is larger than the rate $\alpha$ at which the system may approach the safe set boundary (${k_b > \alpha}$).
    A similar idea was used in \cite{Tamas__ROM_2022} to address safety when trajectories converge to those of a reduced order model.
\end{remark}

\begin{remark}
    One may also show the invariance of another set, ${\bar{\C} = \left\{ x\in\R^n  ~|~ \bar{h}(x)\geq 0 \right\}}$ with ${\bar{h}(x) = h(x) - (\sigma - E)/\alpha}$, since ${\dot{\bar{h}} \geq -\alpha \bar{h}}$ follows from~\eqref{eq:safety_proof_2}.
    Thus, even if parameter $\sigma$ is not large enough, ${\sigma < E}$, the system still evolves within a larger set ${\bar{\C} \supset \C}$ whose size is tuned by $\sigma$.
    As such, $\sigma$ provides robustness against disturbances.
    Meanwhile, ${\sigma > E}$ makes a smaller set ${\bar{\C} \subset \C}$ invariant, hence it may lead to conservativeness in the sense that trajectories may stay far inside the safe set $\C$.
    A similar trade-off was highlighted in \cite{alan2021safe,Alan__arxiv:22}, where the idea of tunable input-to-state safety with a variable $\sigma(h(x))$ was proposed to reduce conservativeness. 
\end{remark}

\begin{figure}[t]
	\centering
	\includegraphics[trim=7 5 15 10,clip,width=0.97\linewidth]{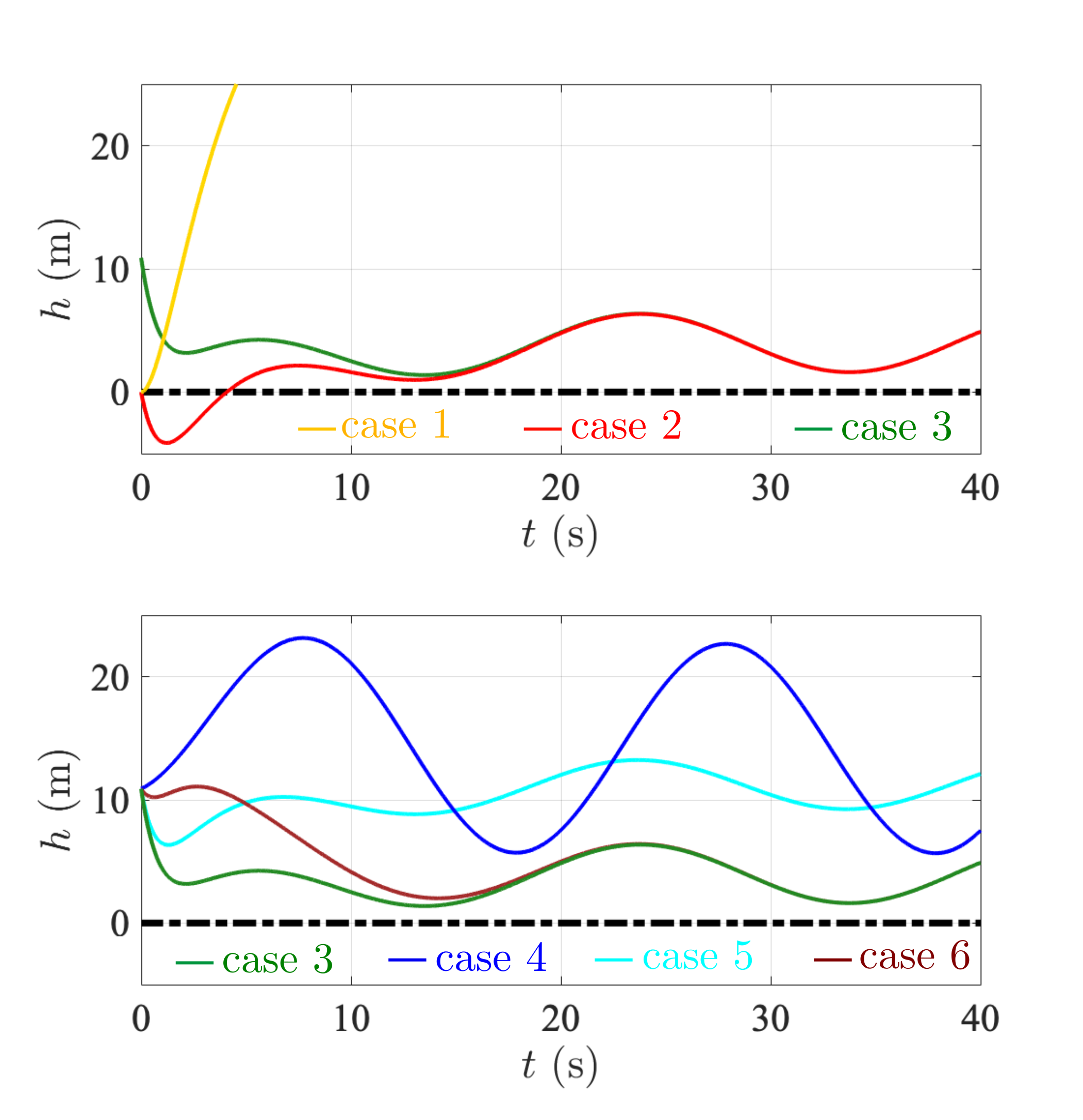}
	\caption{
	Simulations for Example~\ref{ex:controller}. 
	\new{(Top) Three cases are shown for the proposed method: safe but conservative case 1, unsafe case 2 and safe and not conservative case 3.
	(Bottom) Comparative results with controllers from \cite{jankovic2018robust}, \cite{Wang__DOB-CBF:22}, \cite{das_dist:22} as cases 4, 5 and 6, respectively, with respect to case 3.
	}
	}
	\label{fig:main}
	\vspace{-3 mm}
\end{figure}

\begin{example} \label{ex:controller}
Consider the car-following setup in Example~\ref{ex:setup}, the disturbance observer in Example~\ref{ex:observer}, and the controller:
\begin{multline}
    \label{eq:truck_k}
    \hat{k}(x,r,\xi) = - \frac{L_fh(x,r)+\alpha h(x) + \hat{b}(x,\xi)-\sigma}{L_gh(x)} \\
    = (\alpha + k_b) \left( \kappa (D - D_{\rm sf}) - v \right) + \kappa (v_1 - v) + c v^2 - \kappa (\xi + \sigma),
\end{multline}
with ${\kappa = 1/T}$ (cf. \eqref{eq:ACC}),
that is an element of ${\hat{K}_{\rm CBF}(x,r,\xi)}$. We evaluate the performance of the controller through simulations in three different cases based on Theorem~\ref{theo:main}:
\begin{enumerate}
    \item[{\color{black} \textbf{1)}}] ${\sigma=\max\{|e_0|,b_h/k_b\}}$ and $x_0$ is such that ${h(x_0)=0}$,
    \item[{\color{black} \textbf{2)}}] ${\sigma=b_h/k_b}$ and ${h(x_0)=0}$ (that is, ${x_0 \in \C}$ but ${x_0 \notin \C_0}$),
    \item[{\color{black} \textbf{3)}}] ${\sigma=b_h/k_b}$ and ${h(x_0)=(|e_0|-b_h/k_b)/(k_b-\alpha)>0}$ (that is, ${x_0 \in \C_0 \cap \C}$).
\end{enumerate}
We consider the sinusoidal road grade in \eqref{eq:truck_road}, use constant CHV speed profile ${v_1=v^*}$, pick $k_b$ such that ${b_h/k_b=1}$, and start from \new{${|e_0|=10}$ m/s} 
for all cases.

Simulation results are given in the top panel of Fig.~\ref{fig:main}.
Case~1 satisfies the condition in the first point of Theorem~\ref{theo:main}, hence it results in safety. Since $|e_0|$ is large, an equivalently large $\sigma$ yields conservativeness by pushing the trajectory farther inside the safe set. 
Case 2 and case 3 refer to the second point in Theorem~\ref{theo:main}, where the former fails to satisfy the required initial condition and the latter starts within $\C_0$. As such, case 2 leads to safety violation during the transient due to the large $|e_0|$. Case 3, on the other hand, keeps the system safe thanks to starting inside ${\C_0 \subset \C}$,
cf.~\eqref{eq:error_integral}.
\new{Additionally, we implement three controllers from the literature, see bottom panel of Fig.~\ref{fig:main}.
Case 4 shows a worst-case approach from  \cite{jankovic2018robust} with ${\|p(x,w)\|_\infty\leq\overline{p}}$, that yields conservative results.
Case 5 presents a disturbance observer approach from \cite{Wang__DOB-CBF:22} for the disturbance $d=\sin \phi$, which alleviates the conservativeness of the worst-case approach, yet overcompensates for the steady state error due to large initial observer error.
Case 6 denotes the approach of~\cite{das_dist:22}, that directly cancels the transient error using the error bound, therefore results in conservative behavior during the initial transient with respect to case 3.
}
\end{example}

\section{Discussion}
\label{sec:discussion}
Choosing a larger observer gain $k_b$ attains stricter observer error bounds, and consequently a less conservative controller by indulging a smaller robustness parameter $\sigma$.
However, large gains may lead to instability in the presence of unmodeled dynamics.
Next, we demonstrate this for an unmodeled input time delay.
We employ linear stability analysis to investigate the limitations of controllers in ${\hat{K}_{\rm CBF}}$ due to the delay.
Finally, we utilize real road grade and CHV speed data to assess the controller in the example using simulations.

Consider the system with a constant input time delay ${\tau>0}$ representing actuator dynamics:
\begin{equation}
    \label{eq:system_eta}
    \dot{x}(t) = f(x(t),r(t)) + g(x(t)) u(t-\tau) + p(x(t),w(t)),
\end{equation}
(cf. \eqref{eq:system}) and a controller ${u = \hat{k}(x,r,\xi)\in\hat{K}_{\rm CBF}(x,r,\xi)}$.
By defining ${z(t)\triangleq[x(t),\xi(t)]^\top\in\R^{n+1}}$, ${z_\tau(t)\triangleq z(t-\tau)}$ and ${r_\tau(t)\triangleq r(t-\tau)}$, we obtain the closed-loop dynamics: 
\begin{align}
    \label{eq:closedloop_z_eta}
    \dot{z} = F(z,z_\tau,r,r_\tau) + p_z(z,w).
\end{align}
with ${F(z,z_\tau,r,r_\tau) = f_z(z,r) + g_z(z) k_z(z_\tau,r_\tau)}$ and:
\begin{align}
\begin{split}
    f_z(z,r) &= \begin{bmatrix}
    f(x,r) \\ f_\xi(x,\hat{k}(x,r,\xi),r,\xi)
    \end{bmatrix}\!, \;
    g_z(z) = \begin{bmatrix}
    g(x) \\ \mathbf{0}_m
    \end{bmatrix}\!, \\
    k_z(z,r) &= \hat{k}(x,r,\xi), \quad
    p_z(z,w) = \begin{bmatrix}
    p(x,w) \\ \mathbf{0}_q
    \end{bmatrix}\!,
\end{split}
\end{align}
where $f_\xi$ is as defined in \eqref{eq:xidot}, while $\mathbf{0}_m$ and $\mathbf{0}_q$ are zero column vectors with dimensions $m$ and $q$.

To conduct linear stability analysis, we assume that functions $F$ and $p_z$ are differentiable at an equilibrium ${z\equiv z^*}$, ${r\equiv r^*}$ and ${w\equiv \mathbf{0}_q}$.
Note that this assumption was not required for Theorem~\ref{theo:main}.
Defining ${\tilde{z}\triangleq z-z^*}$, ${\tilde{z}_\tau\triangleq z_\tau-z^*}$, ${\tilde{r}\triangleq r-r^*}$ and ${\tilde{r}_\tau\triangleq r_\tau-r^*}$,
the linearized dynamics are:
\begin{align}
    \label{eq:closedloop_z_lin}
    \dot{\tilde{z}} = A \tilde{z} + A_\tau \tilde{z}_\tau + B_w w + B_r \tilde{r} + B_{r_\tau} \tilde{r}_\tau,
\end{align}
where the coefficient matrices read:
\begin{align}
\begin{split}
    A &= \derp{F}{z}\bigg\rvert_{z^*,r^*} + \derp{p_z}{z}\bigg\rvert_{z^*,\mathbf{0}_q}\!, \;
    A_\tau = \derp{F}{z_\tau}\bigg\rvert_{z^*,r^*}\!, \\
    B_w &= \derp{p_z}{w}\bigg\rvert_{z^*,\mathbf{0}_q}\!, \;
    B_r = \derp{F}{r}\bigg\rvert_{z^*,r^*}\!, \;
    B_{r_\tau} = \derp{F}{r_\tau}\bigg\rvert_{z^*,r^*}\!,
\end{split}
\end{align}
evaluated at ${z=z_\tau=z^*}$, ${r=r_\tau=r^*}$ and ${w=\mathbf{0}_q}$.

System \eqref{eq:closedloop_z_lin} is associated with the characteristic function:
\begin{align}
    H(s) = \det{\left(sI-A-A_\tau e^{-s\tau}\right)}
    \label{eq:charfun}
\end{align}
and the characteristic equation ${H(s)=0}$, 
with $I$ being the identity matrix.
For stability, all the infinitely many roots of   this equation must have negative real parts \cite{insperger2011semi}.
At the stability limit, ${H(j\Omega)=0}$ holds for some ${\Omega\geq0}$.
This leads to two algebraic equations after separating real and imaginary parts, which can be solved for parameters of interest like $\alpha$ and $k_b$. The solution yields the stability boundaries that can be plotted as \textit{stability charts} in the space of parameters; see \cite{orosz2016connected} for details. 
We present stability charts for an example.

\begin{figure*}[t]
	\centering
	\includegraphics[trim=0 0 0 0,clip,width=1\linewidth]{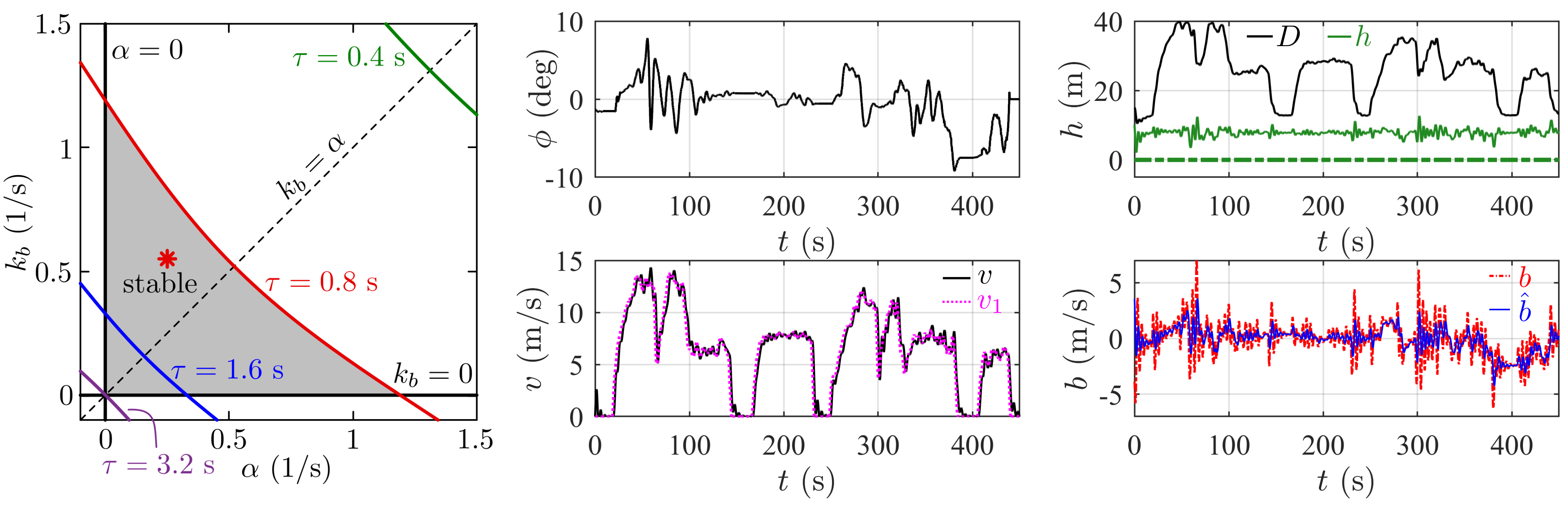}
	\vspace{-3 mm}
	\caption{(Left) Stability charts for Example~\ref{ex:delay}.
	Gray shading denotes the stable region for ${\tau = 0.8}$ s, and red asterisk indicates the parameters selected for simulations.
	The stable region shrinks with increasing delay, and delay prevents selecting an arbitrarily large $k_b$ without instability. (Middle and right) Simulations for Example 4 with real road grade and CHV speed data. The proposed controller maintains safety despite the delay, and the closed-loop system is guaranteed to be linearly stable by careful parameter selection.}
	\vspace{-3 mm}
	\label{fig:chart}
\end{figure*}

\begin{example} \label{ex:delay}
Consider the car-following setup in Example~\ref{ex:setup}, the disturbance observer in Example~\ref{ex:observer} and the controller~\eqref{eq:truck_k} in Example~\ref{ex:controller}.
With an input time delay ${\tau > 0}$ representing computation and communication lags as well as the time required for the CAT to realize brake and engine commands corresponding to the input $u$, we have:
\begin{align}
\begin{split}
    \dot{D}(t) &= v_1(t)-v(t), \\
    \dot{v}(t) &= u(t-\tau) - a(\phi(t)) - c v(t)^2,
\end{split}
\label{eq:truck_system_delay}
\end{align}
that is of form~\eqref{eq:system_eta} with~\eqref{eq:truck_functions}.
The characteristic function 
\begin{align}
\notag
    H(s) = & \left( s^3+ 2 c v^* s^2 \right){\rm e}^{s \tau} + \left( \alpha + k_b + \kappa \right)s^2 \\ 
    &+ \left( (\alpha+k_b) \kappa + \alpha k_b \right)s + \alpha k_b \kappa
    \label{eq:truck_H(s)}
\end{align}
\new{does not contain} $\sigma$, only $\alpha$ and $k_b$.

We calculate the linear stability boundaries as curves parameterized by ${\Omega \geq 0}$, by solving ${H(j\Omega)=0}$ for $\alpha$ and $k_b$.
We plot the boundaries in the ${(\alpha,k_b)}$ parameter space for different delay values, that yields the stability charts in Fig.~\ref{fig:chart}.
Note that the boundaries ${\alpha = 0}$ and ${k_b = 0}$ correspond to ${\Omega = 0}$.
Fig.~\ref{fig:chart} highlights that the observer gain $k_b$ cannot be selected arbitrarily large without instability, and that the stable region shrinks as the delay increases.
At a critical delay $\tau_{\rm cr}$ the stability boundary runs through the origin, and the stable region disappears for ${\tau > \tau_{\rm cr}}$.
The critical delay $\tau_{\rm cr}$ can be found by solving ${H(j\Omega)=0}$ with
${\alpha = 0}$, ${k_b = 0}$
for $\tau$ and $\Omega$, that leads to ${\tau_{\rm cr} = \arcsin(\Omega_{\rm cr}/\kappa)/\Omega_{\rm cr} \approx \pi/(2\kappa)}$ with ${\Omega_{\rm cr} = \sqrt{\kappa^2 - 4 c^2 {v^{*}}^2} \approx \kappa}$.
Given the parameters in Table~\ref{tab:param}, all ${(\alpha,k_b)}$ pairs lead to instability for ${\tau=3.2\ {\rm s}>\tau_{\rm cr}}$.
From now on, we let ${\tau=0.8}$~s, and we choose ${\alpha=0.25}$~1/s and ${k_b=0.55}$~1/s as highlighted by the red asterisk.

Next, we evaluate the robustness of the controller against the input delay by simulations.
We use real data for the CHV's speed profile and for the road grade \cite{Chaozhe__ACC:20} as depicted in Fig.~\ref{fig:chart} (middle).
Notice that around ${t=390}$~s the CHV brakes hard while traveling on steep downhill, leading to a particularly safety-critical situation.
To simulate the CAT's motion, we use the same $|e_0|$ and $b_h$ values as in Example 2, and we invoke the case 3 in Example 3 with ${\sigma=b_h/k_b=1.96}$~m/s and ${h(x_0)=(|e_0|-b_h/k_b)/(k_b-\alpha)=10.1}$~m.
This setup is guaranteed to be safe in the absence of the delay based on Theorem~\ref{theo:main}.
With delay, the controller still maintains safety throughout the run even at the most critical moment at ${t=390}$~s thanks to the disturbance observer; see Fig.~\ref{fig:chart} (right). The disturbance observer $\hat{b}$ tracks the unknown effect of the model mismatch on safety, visualized as $b$ in Fig.~\ref{fig:chart} (right). Meanwhile, stability is guaranteed as parameters were chosen based on the stability chart in Fig.~\ref{fig:chart}.

\end{example}

\section{Conclusions}
\label{sec:conclusions}

This paper addressed the safety-critical control of systems with model uncertainties.
We \new{used a disturbance observer technique to estimate the effect of the uncertainty} on the safety, and we incorporated \new{the observer} into the control design to provide robust safety guarantees by control barrier functions.
We \new{gave conditions on controller parameters that} lead to provable safety, and we discussed the \new{practical} limitations on choosing high parameters.
We demonstrated the efficacy of the proposed \new{method} using numerical simulations \new{for a connected} cruise control system using real data.

\new{Future work includes implementing the proposed framework to other applications, and adding a tunability feature from \cite{alan2021safe} for less conservative results under significant permanent error bounds. Furthermore, enforcing robust safety under multiplicative uncertainties (such as uncertainties in the control matrix $g(x)$) is another topic for future study.
}

\bibliographystyle{IEEEtran}
\bibliography{Bib/Alan_bib}

\end{document}